\DeclareMathAlphabet\mathbfcal{OMS}{cmsy}{b}{n}
\newtheorem{theorem}{Theorem}
\newtheorem{definition}{Definition}
\newtheorem*{theorem-non}{Theorem}
\newtheorem*{lemma-non}{Lemma}
\newtheorem{proposition}{Proposition}
\newcommand{\be}{\begin{equation}}
\newcommand{\ee}{\end{equation}}
\begin{document}

\begin{titlepage}

\title{Jordan Algebraic Formulation of Quantum Mechanics\\ and \\The Non-commutative Landau Problem}
\author{Tekin Dereli${}^{1,2}$\footnote{ E.mail: tdereli@ku.edu.tr },  Ekin S{\i}la Y\"{o}r\"{u}k${}^{1}$\footnote{ E.mail: eyoruk13@ku.edu.tr }}
\date{ ${}^{1}$ {\small  Department of
Physics, College of Sciences, Ko\c{c} University,
34450 Sar{\i}yer, Istanbul, Turkey} \\ 
 ${}^{2}$ {\small Faculty of Engineering and Natural Sciences, Maltepe University, 34857 Maltepe, Istanbul,Turkey} \\
\bigskip \today}

\maketitle

\begin{abstract}
\noindent We present a Jordan algebraic, non-associative formulation of the non-commutative Landau problem coupled to a harmonic potential. To achieve this, an alternative formulation of the Hilbert space version of quantum mechanics is presented. Using this construction, the Hilbert space corresponding to the non-commutative Landau problem is obtained. Non-commutative parameters are then described in terms of an associator in the Jordan algebraic setting. Pure states and density matrices arising from this problem are characterized. This in turn leads us to the Jordan-Schr\"{o}dinger time-evolution equation for the state vectors for this specific problem.  
\end{abstract}

\end{titlepage}

\section{Introduction}

\bigskip

In the algebraic formulations of quantum mechanics, observables play a central role. The main motivation for such approaches is to formulate quantum mechanics in terms of  essential ingredients only; which are the observables, states, expectation values and their time evolution. For instance, it is possible \cite{Segal} to take quantum mechanical observables as the self-adjoint part of a $C^*$-algebra $\mathcal{A}$ and states as positive linear functionals on $\mathcal{A}$. Then the real number that a state assigns to an observable will be the expectation value of that observable in that state\footnote{See \cite{fano}, \cite{sorkin}, \cite{kempf} for deeper formulations of quantum mechanics.}. However, it turns out that the $C^*$-algebra one uses in such approaches usually allows for non-self-adjoint entities. Consequently, to start with, a more restrictive structure is necessary to have a consistent description of the algebra of observables.

\medskip

The Jordan algebraic formulation \cite{JNW} adopted here is based on the fact that quantum mechanical observables form an algebra $\mathcal{A}$ under the 
non-associative Jordan product:
$$
A\bullet B=\frac{1}{2}(AB+BA).
$$
This algebra qualifies as a real Jordan algebra of operators since the Jordan product is commutative and satisfies the Jordan identity
$$
(A\bullet B)\bullet A^2=A\bullet (B\bullet A^2).
$$
This last relation is equivalent to power-associativity of $\mathcal{A}$ when $\mathcal{A}$ is formally real \cite{JNW}. Furthermore, a Jordan algebra is called special if it comes from the Jordan product in an associative algebra, which is the case in present definition of Jordan algebra of observables. Since the algebras arising in quantum mechanics are typically infinite dimensional, it is of interest to develop a structure theory compatible with the infinite dimensional case. This idea is initiated by \cite{ASS}, where the direct analogues of $C^*$-algebras that are called $JB$-algebras are studied. $JB$-algebras are Jordan algebras with a Banach space structure. They satisfy the following conditions:
\begin{align}
& \|A^2\|=\|A\|^2, \nonumber \\
& \|A^2-B^2\|\leq \max\{\|A^2\|,\|B\|^2\}, \nonumber \\
& \|A\bullet B\|\leq \|A\|\|B\|. \nonumber 
\end{align}
The first condition, also known as the $C^*$ condition, is a consequence of the second (positivity condition) and the third conditions (an analogue of Banach space axiom). Therefore, the observables, which can be considered as the self-adjoint part of a $C^*$-algebra, form a $JC$-algebra, or in other words, a special $JB$-algebra. At this point, we can say that if one wants to formulate quantum mechanical problems using the essential ingredients only, it is important to write down a formulation using the Jordan algebraic approach. Also, we note that all $JB$-algebras, with the exception of $H_3(\mathbb{O})$, can be seen as the self-adjoint part of a $C^*$-algebra, so they can be represented on a complex Hilbert space \cite{ASS}. We refer to \cite{townsend}, \cite{niestegge}, \cite{Gunaydin}, \cite{gunaydin2} for more detailed work on the exceptional character of $H_3(\mathbb{O})$, and Jordan algebras.

We would like to note that non-associative structures found applications in physical problems \cite{gunaydin and zumino}, \cite{jackiw1}, \cite{szabo1}, \cite{szabo2}, \cite{mickelsson}, \cite{bojowald}. The associative product used in quantum mechanics do not apply to some exotic systems such as magnetic monopoles, in such problems a non-associative extension of quantum mechanics is vital. It turns out that if we consider dynamics of electrons in uniform distributions of magnetic charge, certain non-associative deformations of quantum mechanics and gravity in three dimensions are needed. The quantitative framework for non-associative quantum mechanics in this
setting exhibits new effects compared to ordinary quantum mechanics with sourceless magnetic fields. On the other hand, the non-associativity of translations in a quantum system with a background magnetic field  has received renewed interest in association with topologically trivial gerbes over $\mathbb{R}^n$. Then the  non-associativity arises when trying to lift the translation group action on the 1-particle system to a second quantized system, described by a 3-cocycle of the group $\mathbb{R}^n$ with values in the unit circle $S^1$.

In this paper, our main interest will be the Landau problem that involves the motion of an electron interacting with a constant, uniform background magnetic field. The minimal coupling of the electron charge density to an external magnetic field is described in terms of the electromagnetic vector potential $\mathbfcal{A}$ and leads to the Hamiltonian in the symmetric gauge 
$$
H=\frac{1}{2m} \left( \vb*{p}-\frac{e}{c}\mathbfcal{A} \right)^2=:\frac{1}{2m}\vb*{P}^2,
$$
where $\vb*{P}$ denotes kinetic momenta. Thus, the Landau problem will relate to the quantum mechanics of a charged particle on the transverse $xy$-plane in the presence of a constant, uniform magnetic field present along the $z$-axis. In metals, electrons occupy  Landau energy levels given by
$$
E_n=\hbar\omega\left(N+\frac{1}{2}\right),
$$ 
where the Larmor frequency $\omega_L=\frac{eB}{mc}$ and $N=n+\frac{m-|m|}{2}$. The infinitely degenerate  Landau levels are labeled by two quantum numbers; the radial quantum number $n \geq 0$, and  the magnetic quantum number $m$ with range $m \geq -n$. The energy spectrum is bounded from below: $N \geq 0$. 

\medskip
From the physical point of view, the Landau problem 
turned out to be one of the most engaging problems in quantum mechanics. It is inspired by ideas in diverse areas of physics such as quantum optics and condensed matter physics. 
Since it finds numerous applications in different areas of physics, the Landau problem is still being studied extensively in the literature. In particular, the first author et al. \cite{DNP} showed recently that a Newton-Hooke duality exists between the $2D$ Hydrogen atom and the Landau problems via the Tits-Kantor-Koecher construction of the conformal symmetries of the Jordan algebra of real symmetric $2\times 2$ matrices. Here we seek a further application of the Jordan algebraic approach to the non-commutative Landau problem. 

Non-commuting spatial coordinates and fields can be seen in actual physical situations \cite{jackiw2}. Generalizations of canonical commutation relations in the traditional approach lead to non-commutative
geometry and appear in quantum mechanics
with source-free magnetic fields. This generalization of the Landau problem to the case of non-commutative
quantum mechanics has been actively studied in the context of physics associated with
non-commutative geometry \cite{AlcaPly}, \cite{HorPly1}, \cite{HorPly2}, \cite{HorPly3}, \cite{OlmoPly}, \cite{AlGomKamPly}, \cite{AlCorHorPly}, \cite{DJT}, \cite{DJ}, \cite{DH1}, \cite{DH2}, \cite{horvathy1}, \cite{horvathy2}, \cite {giri}, \cite{HA}, \cite{horvathy3}. In \cite{AlcaPly}, a correspondence is established between the dynamics of the two-vortex system
and the non-commutative Landau problem. For discussions of the exotic Galilean symmetry of a particle in the non-commutative plane, see \cite{HorPly1}, \cite{HorPly2}, \cite{HorPly3}, \cite{DH2}.  One of the main ingredients of this paper, namely the chiral decomposition in the non-commutative Landau problem, is advocated in \cite{AlGomKamPly}, \cite{AlCorHorPly} and is further generalized to nonvanishing electric fields in \cite{horvathy3}. In this paper, different from the previous works, we prefer to use a non-associative approach to handle the non-commutative Landau problem. It turns out that the traditional (associative) approach is incapable of handling generalizations of the non-commutative Landau problem, namely the problems containing magnetic charges. But our approach here has the advantage of being suitable for such generalizations. 

Given an electron moving in a plane in the presence of a constant external magnetic field $\vb*{B}$ perpendicular
to that plane, the spectrum of the quantized theory gives infinitely degenerate Landau levels, with separation $\mathcal{O}\left(\frac{eB}{mc}\right)$. The limit $B\rightarrow\infty$ or $m\rightarrow 0$ projects onto the lowest Landau level. On each of the projected Landau level, one obtains a non-commuting algebra for the space coordinates giving the center of the cyclotron motion,
$$
[X,Y]=-\frac{i\hbar c}{eB}\neq 0.
$$
The main motivation of this paper follows from the proof below that  the non-commutativity  of spatial coordinates that involve only individual Landau levels can be cast as non-associativity of observables based on 
a corresponding Jordan algebraic formulation.

In what follows, we  formulate the non-commutative Landau problem coupled to a harmonic potential in the Jordan algebraic approach we give below. It turns out that, in this problem, one obtains a pair of commuting Hamiltonians, which can be written in terms of two pairs of mutually commuting creation and annihilation operators of oscillator type. These operators further generate two mutually commuting von Neumann algebras. Based on the Hilbert-Schmidt operator formulation of the non-commutative Landau problem, it is possible to write down  Jordan algebraic formulas. At this point, we take the associative case as our guideline and point out parallels to the associative case whenever possible. It turns out that this problem at hand results as a composite system. The Jordan algebraic approach has the particular advantage that, given the density matrix of the composite system, one can identify the state vectors corresponding to the subsystems. We accomplish this by defining a trace partially on the ambient space. Finally at the end, we are able to derive a Schr\"{o}dinger-like time evolution equation that involves Jordan algebraic associators for this specific problem. This approach we think is something new with further possible applications. 

\bigskip

\section{Hilbert-Schmidt Operator Formulation}

\bigskip

It turns out that given a pure state $\psi$ of a $C^*$-algebra $\mathcal{A}$, one can construct a Hilbert space $\mathscr{H}$, and write down a representation between $\mathcal{A}$ and the collection of bounded operators on $\overline{\mathcal{A}/\mathcal{I}_{\psi}}$, where $\mathcal{I}_{\psi}$ is defined as follows:
$$
\mathcal{I}_{\psi}=\{A\in \mathcal{A}: \psi(A^*A)=0\}.
$$
This construction lies at the heart of the celebrated the Gelfand-Naimark-Segal (GNS) theory. But this method is not suitable for the Jordan algebraic approach. Therefore, one demands another Hilbert space construction, which is based on traces of operators instead of their states.

\medskip

From now on, let us assume that $\mathcal{A}$ is the algebra $\mathcal{B}(\mathscr{H})$ of all bounded operators on $\mathscr{H}$, which is well-known to be a von Neumann algebra. As a result of this, there exists a semi-finite, faithful, normal trace function on $\mathcal{A}$ \cite{BR}. We will then denote positive, trace class elements by $\mathcal{A}_1$, and those $A\in\mathcal{A}$ with $tr(A^*A)<\infty$ by $\mathcal{A}_2$. One can check that $\mathcal{A}_2$ is an ideal in $\mathcal{A}$. In particular, for all $A,B\in\mathcal{A}_2$, we have 
$$
\langle A,B \rangle=trA^*B, \ \ \ \ \ \|A\|_{tr}=\sqrt{trA^*A}.
$$
Inspired by the GNS construction, one obtains the Hilbert space of Hilbert-Schmidt operators in the form
$$
\hat{\mathscr{H}}=\overline{\mathcal{A}_2}^{tr}.
$$
Since $\mathcal{A}_2$ is an ideal, and $\mathcal{A}$ is associative, $\mathcal{A}$ can be represented over $\mathcal{A}_2$ by the multiplication operator which is defined by
$$
\pi(A)B=AB
$$
for all $A\in\mathcal{A}$, $B\in\mathcal{A}_2$. Thus, $\pi$ can be extended to a faithful representation of $\mathcal{A}$ on $\hat{\mathscr{H}}$ \cite{Takesaki}.
If we take a state $\psi$ represented by a density matrix $\rho\in\mathcal{A}_1$, then there is at least one $B\in\mathcal{A}_2$ such that $\rho=BB^*$ holds. At this point, it is worth mentioning that the expectation value of an observable $A\in\mathcal{A}$
is given by
$$
\psi(A)=tr\rho A=trBB^*A=trB^*AB=\langle B|A|B\rangle.
$$
Next let $U:\mathscr{H}\xrightarrow{}\mathscr{H}$ be a bounded, unitary operator. If we take any vector $B\in\mathcal{A}_2$, then $BU$ becomes a vector in $\hat{\mathscr{H}}$ describing the same vector as $B$ does. Therefore, the representation of states in $\hat{\mathscr{H}}$ is not unique, more precisely the states in $\hat{\mathscr{H}}$ are represented by right unitary orbits defined by 
$$
BUU^*B^*=BB^*=\rho.
$$
On the other hand, every Hilbert-Schmidt operator $B\in\mathcal{A}_2$ with nonzero trace leads to a density matrix
$$
\rho_B:=\frac{BB^*}{tr(BB^*)}.
$$
To emphasize the difference between $\mathscr{H}$ and $\hat{\mathscr{H}}$, note that now, every (pure and mixed) state is represented by at least one vector in $\hat{\mathscr{H}}$. It turns out that $\mathscr{H}$ and $\hat{\mathscr{H}}$ are connected in the following algebraic sense \cite{Takesaki}
$$
\hat{\mathscr{H}}=\overline{\mathscr{H}\otimes\mathscr{H}^*}^{tr}.
$$

\medskip

We will now restrict this formalism to the physical system of quantum harmonic oscillator, which turns out to be very crucial in the formulation of the Landau problem. It is then necessary to focus on the application of Hilbert-Schmidt operators, which can be viewed as the collection of bounded operators acting on the configuration space 
$$
\mathscr{H}:=span \left\{ |n\rangle: \frac{1}{\sqrt{n!}}(a^{\dagger})^n|0\rangle \right\}_{n=0}^{\infty}.
$$
Furthermore, this space is isomorphic to the boson Fock space $\mathcal{F}=\{|n\rangle\}_{n=0}^{\infty}$, where the annihilation and creation operators obey the Fock algebra $[a,a^{\dagger}]=1$. Observe that the physical states of the system represented on $\hat{\mathscr{H}}$, the set of Hilbert-Schmidt operators, in other words the set of all bounded operators on $\mathscr{H}=L^2(\mathbb{R})$ equipped with the scalar product
$$
\langle X|Y\rangle=tr(X^*Y)=\sum_k {\langle \phi_k|X^*Y\phi_k\rangle},
$$
where $\{\phi_k \}_{k=0}^{\infty}$ is an orthonormal basis of $\mathscr{H}$. Moreover, the basis vectors of $\hat{\mathscr{H}}$ are given by
$$
\phi_{nl}:=|\phi_n\rangle\langle\phi_l|, \ \ \ n,l=0,1,2,\dots \ .
$$
Here, we should note that $\mathscr{H}\otimes\mathscr{H}^*$ is a pre-Hilbert space containing all finite sums of the form 
$$
\sum_{j,k=0}^{n}\lambda_{jk}|\phi_j\rangle\otimes|\phi_k\rangle^*,
$$
where a basis of $\overline{\mathscr{H}\otimes\mathscr{H}^*}^{tr}$ is 
$$
\{|\phi_j\rangle\otimes|\phi_k\rangle^* \}_{j,k=0}^{\infty}.
$$ 
Setting $|\phi_j\rangle\otimes|\phi_k\rangle^*=|\phi_j\rangle\langle\phi_k|$, $\hat{\mathscr{H}}$ admits an orthonormal basis $\{\phi_{jk}\}_{j,k=0}^{\infty}$ such that 
$$
\phi_{jk}:=|\phi_j\rangle\langle\phi_k|.
$$

Note that $\hat{\mathscr{H}}$ can also be described as the Hilbert space of square integrable functions on $(-\infty,\infty)$ by
\begin{equation}
\hat{\mathscr{H}}=\left\{ \psi(x_1,x_2):  \psi(x_1,x_2)\in\mathcal{B}(\mathscr{H}), \ tr( \psi(x_1,x_2)^{\dagger}\psi(x_1,x_2))<\infty\right\}.
\end{equation}
Recall that $\hat{\mathscr{H}}$ is also the set of bounded operators, described in (1), acting on the configuration space $\mathscr{H}$. Clearly, a general element in this space can be written as
$$
|\psi\rangle=\sum_{n,m=0}^{\infty}c_{n,m}|n,m\rangle,
$$
where $\{|n,m\rangle:=|n\rangle\langle m|\}_{n,m=0}^{\infty}$ is a basis of $\hat{\mathscr{H}}$ satisfying the inner product condition
$$
\langle n_1, m_1|n_2, m_2\rangle=tr[(|n_1\rangle\langle m_1|)^{\dagger}|n_2\rangle\langle m_2|]=\delta_{n_1,n_2}\delta_{m_1,m_2}.
$$

\bigskip

\section{Jordan Algebraic Formulation}

\bigskip

\subsection{Hilbert Space Construction}

In this section, we will present a Jordan algebraic GNS-type construction. But this time, our development will be based on the JB algebra of observables. The idea that a non-associative algebra of observables equipped with a tri-cyclic trace leading to a Hilbert space construction is very fruitful. For more details on this construction and its applications, we especially recommend the recent work \cite{SchuppSzabo}. It turns out that there is an analogue of von Neumann algebras in the context of JB algebras, which are called as JBW algebras \cite{HOS}. One can work with normal, faithful, semi-finite traces on a JBW algebra $\mathcal{A}$ \cite{Iochum}. Trace on a Jordan algebra is defined as a weight and it satisfies the cyclicity condition
$$
tr (A\bullet(B\bullet C))=tr((A\bullet B)\bullet C).
$$
Also for positive elements $A, B\in \mathcal{A}$, and positive real numbers $\lambda$, we have
$$
tr(A+B)=trA+trB,
$$
$$
tr\lambda A=\lambda trA.
$$
Similarly, we will then denote positive, trace class elements by $\mathcal{A}_1$, and those $A\in\mathcal{A}$ with $tr(A^2)<\infty$ by $\mathcal{A}_2$. One can check that $\mathcal{A}_2$ is a Jordan ideal in $\mathcal{A}$. Trace induces a bilinear, symmetric, real scalar product on $\mathcal{A}_2$, and in particular, we have
$$
\langle A|B \rangle=tr(A\bullet B), \ \ \ \ \ \|A\|_{tr}=\sqrt{trA^2}.
$$
Similar to the associative case, closure with respect to this norm yields a real Hilbert space 
$$
\hat{\mathscr{H}}_J=\overline{\mathcal{A}_2}^{tr}.
$$
Next we aim to write down a Jordan representation from $\mathcal{A}$ to this Hilbert space. Note that $\mathcal{A}_2$ is a Jordan module since for all $\omega\in\mathcal{A}_2$, and $A,B\in\mathcal{A}$, we have \cite{HOS}, \cite{Jacobson}
$$
A\bullet\omega=\omega\bullet A,
$$
$$
A^2\bullet(A\bullet\omega)=A\bullet(A^2\bullet\omega),
$$
$$
2A\bullet(B\bullet(A\bullet\omega))+(B\bullet A^2)\bullet\omega=2(A\bullet B)\bullet(A\bullet\omega)+A^2\bullet(B\bullet\omega).
$$
We may then define
$$
\pi_J(A):\mathcal{A}_2\xrightarrow{}\mathcal{A}_2, \ \ \ \pi_J(A)B=A\bullet B
$$
for all $A\in\mathcal{A}, B\in\mathcal{A}_2$. It is easy to check that $\pi_J$ satisfies the following properties
$$
\pi_J(A^2)\bullet\pi_J(A)=\pi_J(A)\bullet\pi_J(A^2),
$$
$$
2\pi_J(A)\bullet\pi_J(B)\bullet\pi_J(A)+\pi_J(B\bullet A^2)=2\pi_J(A\bullet B)\bullet\pi_J(A)+\pi_J(A^2)\bullet\pi_J(B).
$$
Therefore, it becomes a Jordan representation. Moreover, due to the continuity of multiplication, $\pi_J$ can be extended to whole space $\hat{\mathscr{H}}_J$. 

At this point, we should stress the relationship between $\hat{\mathscr{H}}_J$ and $\hat{\mathscr{H}}$. It turns out that $\hat{\mathscr{H}}_J$ is the real subspace of self-adjoint Hilbert-Schmidt operators embedded in $\hat{\mathscr{H}}$ \cite{Bischoff}. Turning back to the problem of quantum harmonic oscillator, $\hat{\mathscr{H}}_J$ is a real span of following three family of operators
$$
|\phi_j\rangle\langle\phi_j|,
\ \ \ \ \ \ \ \ \ \ \ \ \ \ \ \ \ \
$$
\begin{equation}
\frac{1}{2}(|\phi_j\rangle\langle\phi_k|+|\phi_k\rangle\langle\phi_j|),
\end{equation}
$$
\frac{i}{2}(|\phi_j\rangle\langle\phi_k|-|\phi_k\rangle\langle\phi_j|),
$$
where $j>k$, and $j,k$ ranges from $0$ to $\infty$.
Note that these elements form a basis for $\hat{\mathscr{H}}_J$, since they are self-adjoint. Here, 
$\{\phi_k\}_{k=0}^{\infty}$ is an orthonormal basis for $L^2(\mathbb{R})$ which can be written in terms of Hermite polynomials in the position space representation.

\medskip

In a Jordan algebra, one can define the associator in the form
$$
[A,B,C]:=(A\bullet B)\bullet C-A\bullet(B\bullet C).
$$
Our aim is to find a connection between the commutator defined over the associative algebra $\mathcal{B}$ and the associator over $\mathcal{A}$, which is the self-adjoint part of $\mathcal{B}$.
\begin{proposition}
Every self-adjoint element $H\in\mathcal{A}$ can be written as a finite sum of commutators in the form
\begin{equation}
H=i\sum_{j=1}^N[{H_L}_j,{H_R}_j],
\end{equation}
where ${H_L}_j$, and ${H_R}_j$ are self-adjoint operators in $\mathcal{A}$.
\end{proposition}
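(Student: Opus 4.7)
The plan is first to verify the easy direction, that every expression $i[A,B]$ with self-adjoint $A,B\in\mathcal{A}$ is itself self-adjoint, and then to produce a constructive decomposition for an arbitrary self-adjoint $H$. The easy direction follows from $\bigl(i(AB-BA)\bigr)^{*} = -i(B^{*}A^{*}-A^{*}B^{*}) = i(AB-BA)$ using $A=A^{*}$ and $B=B^{*}$. A useful reformulation comes from decomposing any $T\in\mathcal{B}$ as $T=A+iB$ with $A=(T+T^{*})/2$ and $B=(T-T^{*})/(2i)$ both self-adjoint: one checks directly that $[T^{*},T]=2i[A,B]$, so the operators of the form $i[A,B]$ are exactly the self-commutators $\tfrac{1}{2}[T^{*},T]$. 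The proposition is therefore equivalent to showing that every self-adjoint $H\in\mathcal{A}$ is a finite sum of such self-commutators.

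For the converse direction I would exploit the infinite-dimensional Fock structure on $\mathscr{H}=L^{2}(\mathbb{R})$ built in Section 2. The basic building block is the shift $S=\sum_{n\ge 0}|\phi_{n+1}\rangle\langle\phi_{n}|$, for which a direct computation gives $[S^{*},S]=|\phi_{0}\rangle\langle\phi_{0}|$. Setting $A_{0}=(S+S^{*})/2$ and $B_{0}=i(S^{*}-S)/2$, both self-adjoint, yields $i[A_{0},B_{0}]=\tfrac{1}{2}|\phi_{0}\rangle\langle\phi_{0}|$, realising a rank-one projection as a single term of the required form. Conjugating the shift by unitaries that permute the Fock basis and taking real linear combinations, each of the three families of self-adjoint basis elements displayed in (2) can be written as $i[A,B]$ with self-adjoint $A,B$; so every $H$ with finitely many nonzero matrix entries in the basis $\{\phi_{jk}\}$ admits the claimed decomposition immediately.

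The main obstacle is collapsing the potentially infinite spectral expansion of a general bounded self-adjoint $H$ into \emph{finitely} many commutators. Here the appropriate idea, in the spirit of the Halmos and Brown--Pearcy theorems for $\mathcal{B}(\mathscr{H})$ on infinite-dimensional $\mathscr{H}$, is to telescope infinitely many rank-one contributions into a single self-commutator by absorbing them into one carefully chosen partial isometry. Concretely, decompose $\mathscr{H}=\bigoplus_{k}\mathscr{H}_{k}$ into infinite-dimensional blocks, express the action of $H$ on each block in a uniform way, and construct one operator $T$ that shifts between blocks so that $\tfrac{1}{2}[T^{*},T]$ reproduces $H$ up to a finite-rank correction which can be handled by one additional summand. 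Verifying the boundedness of $T$ and the operator-norm convergence of the rearrangement is the delicate technical point; once in place one obtains a fixed small $N$ (one expects $N=2$ or $3$) and the desired presentation $H=i\sum_{j=1}^{N}[H_{L_{j}},H_{R_{j}}]$ with each $H_{L_{j}},H_{R_{j}}$ self-adjoint.
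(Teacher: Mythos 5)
Your reduction is correct and is the right way to set the problem up: the identity $[T^{*},T]=2i[A,B]$ for $T=A+iB$ shows that $\{i[A,B]:A,B \text{ self-adjoint}\}$ coincides with $\{\tfrac12[T^{*},T]:T\in\mathcal{B}\}$, so the proposition is equivalent to writing every self-adjoint $H$ as a finite sum of self-commutators, and your shift computation $[S^{*},S]=|\phi_{0}\rangle\langle\phi_{0}|$ is accurate. The genuine gap is that the one step that carries the entire weight of the proposition --- passing from finite-rank $H$ to an arbitrary bounded self-adjoint $H$ with a \emph{fixed finite} number of commutators --- is left as a description of a construction (``decompose $\mathscr{H}$ into blocks, build one $T$ that telescopes the contributions, verify boundedness'') rather than an argument. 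Nothing in your first two paragraphs feeds into it: the finite-rank case gives an $N$ that grows with the rank, and norm-density of finite-rank operators does not help because the set of sums of at most $N$ self-commutators is not obviously norm-closed. You also never isolate the one real obstruction: in any algebra carrying a finite trace (finite dimensions, a $\mathrm{II}_{1}$ factor) the statement is \emph{false}, since $\mathrm{tr}(i[A,B])=0$, so your construction must use infinite-dimensionality of $\mathscr{H}$ in an essential way and you should say exactly where it enters. The quickest way to close the gap along your route is not to build $T$ by hand but to invoke Halmos's theorem that on an infinite-dimensional Hilbert space every bounded operator is a sum of two commutators: write $H=[X_{1},Y_{1}]+[X_{2},Y_{2}]$, split $X_{k}=A_{k}+iA_{k}'$ and $Y_{k}=B_{k}+iB_{k}'$ into self-adjoint parts, and take the self-adjoint part of the whole identity; since the self-adjoint part of $[X,Y]$ is exactly $i[A,B']+i[A',B]$, this gives the proposition with $N\le 4$.

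For contrast, the paper's proof is entirely different in character: it regards $e^{iH}$ as inducing an element of the connected automorphism group of the Jordan algebra, invokes Upmeier's theorems that such automorphisms are finite products of symmetries and exponentials of inner derivations $\sum_{j}[L_{a_{j}},L_{b_{j}}]$, and then matches generators of the resulting one-parameter groups via Stone's theorem. Your operator-theoretic route is more concrete and self-contained but is tied to $\mathcal{A}$ being the self-adjoint part of $\mathcal{B}(\mathscr{H})$; if you pursue it, state the Halmos (or Brown--Pearcy) input as a theorem with a reference rather than as a construction you expect to work.
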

\begin{proof}
Let $Aut^0(\mathcal{A})$ be the connected $1-$component of the Banach Lie group of all automorphisms of $\mathcal{A}$. Let $A$ be a self-adjoint element in $\mathcal{A}$. We define an operator $U$ as $U=\exp(iA)$, which is clearly a unitary operator. By \cite{Upmeier}, $Aut^0(\mathcal{A})$ is known to be algebraically generated by the symmetries in $\mathcal{A}$. Moreover by Theorem 1.6 of \cite{Upmeier}, each unitary operator can be written as a finite product of symmetries in $\mathcal{A}$. Combining these facts, we deduce that $U\in Aut^0(\mathcal{A})$. Armed with this, we can now apply Theorem 2.5 of \cite{Upmeier} to obtain
$$
U= \exp\left(\sum_{j=1}^n[L_{a_j},L_{b_j}] \right),
$$
where $a_j,b_j\in\mathcal{A}$ for all $j$, and $L_ab:=a\bullet b$ is the left multiplication operator. Consequently, we have
$$
\exp\left(iA-\sum_{j=1}^n[L_{a_j},L_{b_j}] \right)=1.
$$
Our next goal is to verify that 
$$
\exp\left(t\left(iA-\sum_{j=1}^n[L_{a_j},L_{b_j}]\right)\right)
$$
is a one parameter family for $t\in\mathbb{R}$. First, this operator depends continuously on $t$. Second, when $t=0$, the operator reduces to the identity operator. Lastly, for any $t,s\in\mathbb{R}$, it follows from the properties of the exponential of an operator that 
$$
\exp\left(\left(t+s\right)\left(iA-\sum_{j=1}^n[L_{a_j},L_{b_j}]\right)\right)=\exp\left(t\left(iA-\sum_{j=1}^n[L_{a_j},L_{b_j}]\right)+s\left(iA-\sum_{j=1}^n[L_{a_j},L_{b_j}]\right)\right)
$$
$$
=\exp\left(t\left(iA-\sum_{j=1}^n[L_{a_j},L_{b_j}]\right)\right)\exp\left(s\left(iA-\sum_{j=1}^n[L_{a_j},L_{b_j}]\right)\right).
$$
By the uniquness of generators of one-parameter families of operators (see Stone's theorem \cite{Stone}), we may now conclude that
$$
A=i\sum_{j=1}^n[L_{a_j},L_{b_j}].
$$
And (3) follows.
\end{proof}

\subsection{Example: Non-commutative Landau Problem}

Let $\mathcal{B}$ be the $C^*$-algebra of operators on $\mathscr{H}$ with self-adjoint part $\mathcal{A}$, which is considered as the Jordan algebra of observables. Now we  consider the motion of an electron in the $xy$-plane under the influence of a constant magnetic field pointing along the positive $z$ direction in the symmetric gauge $\mathbfcal{A}=\left(-\frac{B}{2}y, \frac{B}{2}x\right)=(A_x,A_y)$. Furthermore, we will assume the presence of a harmonic potential. This problem is described by the following Hamiltonian
$$
H_{\theta}=\frac{1}{2m}\left(p_x+\frac{eB}{2c}y\right)^2+\frac{1}{2m}\left(p_y-\frac{eB}{2c}x\right)^2+\frac{m\omega^2}{2}(x^2+y^2),
$$
where $\theta=-\frac{\hbar c}{eB}$ is the non-commutative parameter, $\omega_L=\frac{eB}{mc}$ is the Larmour frequency. Position and momentum operators satisfy the following nonzero commutation relations of the non-commutative Heisenberg algebra
\begin{equation}
[x,y]=i\theta, \ \ \ [x,p_x]=[y,p_y]=i\hbar.
\end{equation}

Here, $x$ and $y$ denotes the center of the cyclotron motion, and they can be written in terms of commuting position coordinates $\tilde{x}$ and $\tilde{y}$ as follows
$$
x=\tilde{x}+\frac{c}{eB}\left(p_y-\frac{e}{c}A_y \right), \ \ y=\tilde{y}-\frac{c}{eB}\left(p_x-\frac{e}{c}A_x \right).
$$
Next, we define new operators in $\mathcal{B}$, corresponding to complex variables, related to chiral decomposition of non-commutative Landau problem. It is worth remarking that our approach here is reminiscent of the introduction of complex coordinates in the Landau problem pertaining to the Bloch waves quasi-periodicity on a torus (see \cite{DP}). Precisely, we have
$$
z=\frac{1}{\sqrt{2}}(x+iy), \ \ \ \overline{z}=\frac{1}{\sqrt{2}}(x-iy).
$$
Corresponding momentum operators are
$$
p_{z}=\frac{1}{\sqrt{2}}(p_{x}-i p_{y}), \ \ \ p_{\overline{z}}=\frac{1}{\sqrt{2}}(p_{x}+i p_{y}).
$$
These operators can be shown to satisfy the following commutation relations in $\mathcal{B}$ 
$$
[p_{z},z]=[p_{\overline{z}},\overline{z}]=-i\hbar,
$$
$$
[p_{z},\overline{z}]=[p_{\overline{z}},z]=[z,\overline{z}]=[p_{z},p_{\overline{z}}]=0.
$$
We set two pairs of creation and annihilation operators in $\mathcal{B}$
$$
A_+= \xi\frac{\overline{z}}{2}+\frac{i}{\xi\hbar}p_{z}, \ \ \ A_+^{\dagger}= \xi\frac{z}{2}-\frac{i}{\xi\hbar}p_{{\overline{z}}},
$$
$$
A_-= \xi\frac{z}{2}+\frac{i}{\xi\hbar}p_{{\overline{z}}}, \ \ \ A_-^{\dagger}= \xi\frac{\overline{z}}{2}-\frac{i}{\xi\hbar}p_{z},
$$
where 
$$
\xi=\sqrt[4]{\frac{m^2\omega^2/\hbar^2+m^2{\omega_L}^2/4\hbar^2}{1-(m\omega_L\theta/2)+(m^2\omega^2\theta^2/16)+(m^2{\omega_L}^2\theta^2/64)}}.
$$
The nonzero commutation relations between these operators are
$$
[A_+,A_+^{\dagger}]=1, \ [A_-,A_-^{\dagger}]=1.
$$
Finally, defining $\Omega_{\pm}:=\Omega\pm\frac{\tilde{\omega}_L}{2}$, where
$$
\Omega=\sqrt{\omega^2+\frac{\omega_L^2}{4}-\frac{m\omega_L\omega^2\theta}{2}-\frac{m \omega_L^3\theta}{8}+\left(\omega^2+\frac{\omega_L^2}{4}\right)\left(\left(\frac{m\omega\theta}{4}\right)^2+\left(\frac{m\omega_L\theta}{8}\right)^2\right)},
$$
$$
\tilde{\omega}_L=\omega_L\left(1-\left(\frac{\omega_L}{4}+\frac{\omega^2}{\omega_L}\right)m\theta\right),
$$
one can write $H_{\theta}$ as follows \cite{aremuaetal}, \cite{HA}  
\begin{equation}
H_{\theta}=\hbar\Omega_+\left(N_++\frac{1}{2}\right)+\hbar\Omega_-\left(N_-+\frac{1}{2}\right), \ \ \ N_{\pm}=A_{\pm}^{\dagger}A_{\pm}.
\end{equation}
Let
$$
H_{\pm}=\hbar\Omega_{\pm}\left(N_{\pm}+\frac{1}{2}\right).
$$
We define the operator $H_+\otimes \mathbb{I}+\mathbb{I} \otimes H_-$ to be the infinitesimal generator of the strongly continuous one-parameter unitary group $e^{itH_+}\otimes e^{itH_-}$. Thus by Stone's theorem, $H_+\otimes \mathbb{I}+\mathbb{I} \otimes H_-$ is self-adjoint (for details of the above argument see \cite{HA} and page 432, Definition 19.15 of \cite{Hall}). 
Using that, (5) can be written as 
$$
H_{\theta}=H_+\otimes\mathbb{I}_{\hat{\mathscr{H}}_{J,-}}+\mathbb{I}_{\hat{\mathscr{H}}_{J,+}}\otimes H_-.
$$
Here $\hat{\mathscr{H}}_{J,+}\otimes\hat{\mathscr{H}}_{J,-}$ is the real Jordan-Hilbert space for this problem described in (2), which is the domain of $H_{\theta}$, and $\mathbb{I}_{\hat{\mathscr{H}}_{J,\pm}}$ are the corresponding identity operators. 

\begin{theorem}
Nonzero commutation relations in $\mathcal{B}$, given in $(4)$, can be written as associator relations in $\mathcal{A}$ as follows
\begin{equation}
[y_L,x,y_R]=\theta, \ \ \ 
[{p_x}_L,x,{p_x}_R]=\hbar, \ \ \ [{p_y}_L,y,{p_y}_R]=\hbar,  
\end{equation}
where
$$
y_L={p_x}_L={p_y}_L=4H_{\theta},
$$
$$
y_R=\left(\frac{1}{2\hbar\Omega_-}-\frac{1}{2\hbar\Omega_+} \right)x-\left(\frac{1}{\xi^2\hbar^2\Omega_+}+\frac{1}{\xi^2\hbar^2\Omega_-} \right)p_y,
$$
$$
{p_x}_R=\left(\frac{\xi^2}{4\Omega_+}+\frac{\xi^2}{4\Omega_-} \right)x+\left(\frac{1}{2\hbar\Omega_+}-\frac{1}{2\hbar^2\Omega_-} \right)p_y,
$$
$$
{p_y}_R=\left(\frac{\xi^2\hbar}{4\Omega_+}+\frac{\xi^2\hbar}{4\Omega_-} \right)y+\left(\frac{1}{2\Omega_-}-\frac{1}{2\Omega_+} \right)p_x.
$$
\end{theorem}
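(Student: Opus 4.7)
The strategy is to reduce each Jordan associator identity in (6) to a commutator identity in the ambient associative algebra $\mathcal{B}$, and then verify those commutator identities by means of the chiral decomposition (5). The key reduction is the purely algebraic identity
$$
[A,B,C] = (A\bullet B)\bullet C - A\bullet(B\bullet C) = \tfrac{1}{4}\,[B,[A,C]],
$$
valid in any special Jordan algebra; a one-line expansion of the Jordan products verifies it. Substituting $y_L={p_x}_L={p_y}_L = 4H_{\theta}$, the prefactor $4$ cancels the $\tfrac{1}{4}$, and (6) becomes equivalent to the three commutator identities
$$
[x,[H_{\theta},y_R]] = \theta,\qquad [x,[H_{\theta},{p_x}_R]] = \hbar,\qquad [y,[H_{\theta},{p_y}_R]] = \hbar.
$$

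In view of the noncommutative Heisenberg relations (4), it suffices to establish
$$
[H_{\theta},y_R] = -i\,y,\qquad [H_{\theta},{p_x}_R] = -i\,p_x,\qquad [H_{\theta},{p_y}_R] = -i\,p_y,
$$
since then, for example, $[x,-iy] = -i(i\theta) = \theta$, and similarly for the other two. Each of these commutators is computed directly via the chiral decomposition: since $H_{\theta}$ is diagonal in the number basis, one has $[H_{\theta},A_{\pm}] = -\hbar\Omega_{\pm}A_{\pm}$ and $[H_{\theta},A_{\pm}^{\dagger}] = +\hbar\Omega_{\pm}A_{\pm}^{\dagger}$, so every linear combination of $A_{\pm},A_{\pm}^{\dagger}$ is an eigenvector of $\mathrm{ad}(H_{\theta})$ whose image still lies in the same four-dimensional span. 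Inverting the defining relations for $A_{\pm},A_{\pm}^{\dagger}$ expresses $x,y,p_x,p_y$ in this span, and substituting back into the stated formulas for $y_R,{p_x}_R,{p_y}_R$ rewrites each of them as a specific linear combination of $A_{\pm},A_{\pm}^{\dagger}$.

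The remainder is then pure bookkeeping: apply $\mathrm{ad}(H_{\theta})$ term by term, and check that the resulting coefficients of $A_{\pm},A_{\pm}^{\dagger}$ coincide respectively with those of $-iy$, $-ip_x$, $-ip_y$. The main obstacle is expected to lie entirely here, in tracking the several parameter-dependent prefactors ($\xi$, $\Omega_{\pm}$, $\hbar$) and the signs of each ladder operator through the inversion; the cleanest way to organize this is to fix an ordered basis $(A_+,A_+^{\dagger},A_-,A_-^{\dagger})$ of the relevant subspace, represent each of $x, y, p_x, p_y, y_R, {p_x}_R, {p_y}_R$ as a coefficient vector in $\mathbb{C}^4$, and verify all three identities by a single matrix computation. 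A worthwhile conceptual byproduct is that the identity $[A,B,C]_J = \tfrac{1}{4}[B,[A,C]]$ exhibits precisely how the Jordan associator encodes the underlying associative commutators, providing the non-commutative--to--non-associative dictionary that the theorem makes explicit.
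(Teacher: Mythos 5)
Your proposal is correct and follows essentially the same route as the paper: it invokes the special-Jordan-algebra identity $[A,B,C]=\tfrac14[B,[A,C]]$ so that the factor $4$ in $y_L={p_x}_L={p_y}_L=4H_\theta$ cancels, reduces each associator to $[x,[H_\theta,y_R]]$ etc., and verifies $[H_\theta,y_R]=-iy$, $[H_\theta,{p_x}_R]=-ip_x$, $[H_\theta,{p_y}_R]=-ip_y$ by expanding $x,y,p_x,p_y$ and the $R$-operators in the ladder basis $A_\pm,A_\pm^\dagger$, on which $\mathrm{ad}(H_\theta)$ acts diagonally. The only difference is presentational (your coefficient-vector bookkeeping versus the paper's explicit term-by-term commutators), so nothing further is needed.
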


\begin{proof}
In the case of a special Jordan algebra, an associator can be expressed as a double commutator
\begin{equation}
[A,B,C]=\frac{1}{4}[B,[A,C]].
\end{equation}
In order to find the operators explicitly, one needs to write $x, y, p_x$, and $p_y$ in terms of  newly defined creation and annihilation operators. It turns out that
$$
x=\frac{1}{\sqrt{2}\xi}(A_-+A_++A_+^{\dagger}+A_-^{\dagger}),
$$
$$
y=\frac{i}{\sqrt{2}\xi}(-A_-+A_+-A_+^{\dagger}+A_-^{\dagger}),
$$
$$
p_x=\frac{i\xi\hbar}{2\sqrt{2}}(-A_--A_++A_+^{\dagger}+A_-^{\dagger}),
$$
$$
p_y=\frac{\xi\hbar}{2\sqrt{2}}(-A_-+A_++A_+^{\dagger}-A_-^{\dagger}).
$$
The following commutation relation can be written for $y$
$$
\left[ \frac{H_{\theta}}{\Omega_+},A_++A_+^{\dagger} \right]+\left[ \frac{H_{\theta}}{\Omega_-},-A_--A_-^{\dagger} \right]
$$
$$
=\hbar([A_+^{\dagger}A_+,A_+]+[A_+^{\dagger}A_+,A_+^{\dagger}]+[A_-^{\dagger}A_-,-A_-]+[A_-^{\dagger}A_-,-A_-^{\dagger}])
$$
$$
=\hbar(-A_++A_+^{\dagger}+A_--A_-^{\dagger})=i\sqrt{2}\xi\hbar y.
$$
So we have,
$$
i\left[ H_{\theta}, \left(\frac{1}{2\hbar\Omega_-}-\frac{1}{2\hbar\Omega_+} \right)x-\left(\frac{1}{\xi^2\hbar^2\Omega_+}+\frac{1}{\xi^2\hbar^2\Omega_-} \right)p_y \right]=y.
$$
Similarly, one can check that
$$
i\left[ H_{\theta},\left(\frac{\xi^2}{4\Omega_+}+\frac{\xi^2}{4\Omega_-} \right)x+\left(\frac{1}{2\hbar\Omega_+}-\frac{1}{2\hbar^2\Omega_-} \right)p_y \right]=p_x,
$$
$$
i\left[ H_{\theta}, \left(\frac{\xi^2\hbar}{4\Omega_+}+\frac{\xi^2\hbar}{4\Omega_-} \right)y+\left(\frac{1}{2\Omega_-}-\frac{1}{2\Omega_+} \right)p_x\right]=p_y.
$$
Using (7) and the commutation relation provided for $y$, $p_x$ and $p_y$, associator relations (6) follow.
\end{proof}

\subsection{Density Matrices in the JBW-algebra Setting}

\medskip

It turns out that one needs a notion of density matrix in the JBW-algebra $\mathcal{A}$ in order to describe the state of a quantum system. For the purpose of applications to JBW-algebras, let us recall the following result from \cite{AS}, \cite{ASSS}. 
\begin{theorem-non}
Let $\mathcal{A}$ be a commutative, order unit algebra that is the dual of a base norm space $V$ such that the multiplication in $\mathcal{A}$ is separately $w^*$-
continuous. Then for each $a\in\mathcal{A}$ and each $\epsilon>0$ there are orthogonal projections $p_1,p_2,\dots,p_n$ in the $w^*$-closed subalgebra generated by $a$ and $1$, and scalars $\lambda_1,\lambda_2,\dots,\lambda_n$ such that
$$
\|a-\sum_{i=1}^n\lambda_ip_i\|<\epsilon.
$$ 
The set of finite linear combinations of orthogonal projections is norm dense in $\mathcal{A}^{**}$, here $\mathcal{A}^{**}$ denotes the bidual of $\mathcal{A}$, which is a JBW-algebra.
\end{theorem-non}

We will assume in Theorem 3 that any element can be written as a finite linear combination of orthonormal projection operators. In particular, for any $B\in\hat{\mathscr{H}}_J$ we have
\begin{equation}
B=\sum_{i=1}^n\lambda_iP_i,
\end{equation}
where $\lambda_i$'s are real scalars, and $P_i$'s denoting the orthonormal operators. And for a general element, we use the fact that it can be approximated in norm by linear combinations of orthonormal projections. So, the general case follows from the norm continuity of multiplication.

\begin{definition}
A self-adjoint operator $\rho\in\mathcal{A}$ is a density matrix if $\rho$ is non-negative and $tr(\rho)=1$.
\end{definition}

\begin{theorem}
Assume that $\rho$ is a density matrix on $\hat{\mathscr{H}}_J$, then the map $\Phi_{\rho}:\mathcal{A}\xrightarrow{}\mathbb{R}$ given by
$$
\Phi_{\rho}(A)=tr(\rho\bullet A)
$$
is a family of expectation values.
\end{theorem}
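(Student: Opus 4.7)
The plan is to verify that $\Phi_\rho$ satisfies the three defining properties of a state on the JB-algebra $\mathcal{A}$, namely $\mathbb{R}$-linearity, normalization $\Phi_\rho(\mathbb{I})=1$, and Jordan-positivity $\Phi_\rho(A^2)\geq 0$ for every self-adjoint $A\in\mathcal{A}$. These are precisely what is required in order that the real number $\Phi_\rho(A)$ be interpretable as the expectation value of the observable $A$ in the state encoded by $\rho$.

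Linearity is immediate from bilinearity of $\bullet$ and linearity of the trace, so $\Phi_\rho(\lambda A+\mu B)=\lambda\Phi_\rho(A)+\mu\Phi_\rho(B)$. Normalization follows from the definition of a density matrix: since $\mathbb{I}$ is the unit, $\rho\bullet\mathbb{I}=\rho$, hence $\Phi_\rho(\mathbb{I})=tr(\rho)=1$.

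The substantive point is positivity. Here I would invoke the spectral-type result quoted just before Definition~1: in the $w^*$-closed subalgebra generated by $\rho$ and $\mathbb{I}$, $\rho$ is approximated in norm by finite sums $\sum_{i=1}^n\lambda_i P_i$ with $P_i$ mutually orthogonal projections, and non-negativity of $\rho$ together with $tr(\rho)=1$ forces the scalars to satisfy $\lambda_i\geq 0$ and $\sum_i\lambda_i\,tr(P_i)=1$. Using that $\mathcal{A}$ is realized as the self-adjoint part of the $C^*$-algebra $\mathcal{B}$, together with cyclicity of the Jordan trace, for any self-adjoint $A$ and any projection $P$ one computes
$$
tr(P\bullet A^2)=\tfrac{1}{2}tr(PA^2+A^2P)=tr(APA)=tr\bigl((PA)^{*}(PA)\bigr)\geq 0.
$$
Summing with the non-negative weights $\lambda_i$ yields $\Phi_\rho(A^2)\geq 0$ on the dense subalgebra of projection sums; norm continuity of the Jordan product and of $tr$ on $\mathcal{A}_1$ then extends the inequality to arbitrary density matrices $\rho$, completing the verification.

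The main obstacle, as I see it, lies in the approximation step rather than in any of the algebraic identities: one must ensure that the projection expansion provided by the quoted theorem can be chosen so that the approximants to $\rho$ are themselves positive with unit trace (so that each approximant defines a bona fide $\Phi_{\rho_n}$), and that positivity is preserved in the norm limit. Both are standard consequences of JBW spectral theory and norm continuity of the trace, but they are the only places in the argument where the proof relies on structure theory beyond formal manipulation of the Jordan trace and the associative embedding.
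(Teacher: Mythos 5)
Your proposal verifies only normalization and positivity, and on those points it is essentially sound: normalization is the same one-line observation as in the paper, and for positivity you take a slightly different route --- approximating $\rho$ by non-negative combinations of orthogonal projections and using the associative identity $tr(P\bullet A^2)=tr\bigl((PA)^*(PA)\bigr)\geq 0$ --- whereas the paper works directly with the non-negative square root $B$ of $\rho$ and uses cyclicity to write $tr(A\bullet\rho)=tr(B\bullet(A\bullet B))\geq 0$. Either argument is acceptable for condition (ii), and yours has the mild advantage of not needing to assert positivity of $B\bullet(A\bullet B)$.

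The genuine gap is that the notion of ``family of expectation values'' being used here (following Hall's book, which the paper cites) has a third defining condition that you never address: if $A_n\bullet\psi\to A\bullet\psi$ in norm for every $\psi\in\hat{\mathscr{H}}_J$, then $\Phi_\rho(A_n)\to\Phi_\rho(A)$. This continuity (normality) requirement is where almost all of the analytic work in the paper's proof lives: one first deduces from the uniform boundedness principle that $\|A_n\|\leq C$, then bounds each term $|\langle P_i,(B\bullet(A_n\bullet B))\bullet P_i\rangle|\leq C\|B\bullet P_i\|^2$ with $\sum_i\|B\bullet P_i\|^2=tr(\rho)<\infty$, and finally applies dominated convergence to interchange the limit in $n$ with the sum over the projection basis, giving $tr(B^2\bullet A_n)\to tr(B^2\bullet A)$. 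Without this step the theorem is not proved, since linearity, normalization and positivity alone do not distinguish a normal state from an arbitrary one; you should add this argument (your projection expansion of $B$ is in fact exactly the right tool for it).
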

\begin{proof}
In order to show this, we need to verify the following three properties of $\Phi$. \\
(i) $\Phi_{\rho}(\mathbb{I})=1$, \\ (ii) $\Phi_{\rho}(A)$ is real for all $A\in\mathcal{A}$, and $\Phi_{\rho}(A)$ is non-negative whenever $A$ is non-negative. \\
(iii) For any sequence $A_n\in\mathcal{A}$, if $$\lim_{n\rightarrow \infty}\|A_n\bullet \psi-A\bullet \psi\|=0$$ holds for all $\psi\in\hat{\mathscr{H}}_J$, then $\Phi_{\rho}(A_n)$ converges to $\Phi_{\rho}(A)$. \\
Note that (i) is immediate from the definition of a density matrix. Since $A\in\mathcal{A}$ is self-adjoint, $tr(\rho\bullet A)$ is real by definition. Let $B\in\hat{\mathscr{H}}_J$ be the non-negative, self-adjoint square root of $\rho$. It follows that $tr(A\bullet (B\bullet B))=tr(B\bullet (A\bullet B))\geq 0$, since $B\bullet (A\bullet B)\in\hat{\mathscr{H}}_J$ is self-adjoint and non-negative. This verifies condition (ii). Finally, assume $A_n\bullet\psi$ converges in norm to $A\bullet\psi$ for all $\psi\in\hat{\mathscr{H}}_J$. Then, $\|A_n\bullet\psi\|$ is bounded for each fixed $\psi$. By the principle of uniform boundedness, there is a constant $C$ such that $\|A_n\|\leq C$. Let $P_i$'s denote the orthonormal projection operators introduced in (8), we have
$$
|\langle P_i,(B\bullet(A_n\bullet B))\bullet P_i\rangle|=|\langle P_i, (A_n\bullet (B\bullet P_i))\bullet B\rangle|
$$
$$
=|\langle (B\bullet P_i),A_n\bullet (B\bullet P_i))\rangle|\leq C\|B\bullet P_i \|^2,
$$
$$
\lim_{m\rightarrow\infty} \sum_{i=1}^m\|B\bullet P_i\|^2=\lim_{m\rightarrow\infty} \sum_{i=1}^m\langle B\bullet P_i,B \bullet P_i \rangle=\lim_{m\rightarrow\infty} \sum_{i=1}^m\langle P_i,\rho \bullet P_i \rangle=tr(\rho)<\infty.
$$
Since $A_n\bullet(B\bullet P_i)$ converges to $A\bullet(B\bullet P_i)$ for each $i$, by dominated convergence
$$
tr(B^2\bullet A)=tr(B\bullet (A\bullet B))=\lim_{m\rightarrow\infty} \sum_{i=1}^m\langle P_i,(B\bullet(A\bullet B))\bullet P_i \rangle
$$
$$
=\lim_{m\rightarrow\infty}\lim_{n\rightarrow\infty} \sum_{i=1}^m\langle P_i,(B\bullet(A_n\bullet B))\bullet P_i \rangle
$$
$$
=\lim_{m\rightarrow\infty} \sum_{i=1}^m\langle P_i,(B\bullet(A\bullet B))\bullet P_i \rangle=\lim_{n\rightarrow\infty}tr(B\bullet(A_n\bullet B))=\lim_{n\rightarrow\infty}tr(B^2\bullet A_n).
$$
\end{proof}

For a normal state $B\in\hat{\mathscr{H}}_J$, the expectation value of an observable $A\in\mathcal{A}$ is given by
$$
\Phi_B(A)=\langle B,A\bullet B\rangle=tr(B\bullet(A\bullet B))=tr (B^2\bullet A)=tr(\rho\bullet A),
$$
where $\rho=B^2$ is the corresponding density matrix.

\begin{definition}
A density matrix is called a pure state if it can be written as a projection operator in $\mathcal{A}$, otherwise it is called a mixed state.
\end{definition}

Now, we can revisit the non-commutative Landau problem, where the Hilbert space for the composite system is the Hilbert tensor product $\hat{\mathscr{H}}_{J,+}\otimes\hat{\mathscr{H}}_{J,-}$ of the Hilbert spaces $\hat{\mathscr{H}}_{J,+}$ and $\hat{\mathscr{H}}_{J,-}$ describing the subsystems. Assuming $\rho$ is a density matrix on $\hat{\mathscr{H}}_{J,+}\otimes\hat{\mathscr{H}}_{J,-}$, there exists a unique density matrix $\rho_+$ on $\hat{\mathscr{H}}_{J,+}$ with the property that $tr(\rho_+\bullet A)=tr(\rho \bullet (A\otimes \mathbb{I}_{\hat{\mathscr{H}}_{J,-}}))$ for all $A\in\mathcal{B}(\hat{\mathscr{H}}_{J,+})$. Similarly, there exists a unique density matrix $\rho_-$ on $\hat{\mathscr{H}}_{J,-}$ with the property that $tr(\rho_-\bullet B)=tr(\rho \bullet (\mathbb{I}_{\hat{\mathscr{H}}_{J,+}}\otimes B))$ for all $B\in\mathcal{B}(\hat{\mathscr{H}}_{J,-})$. Since, the state of the first system is independent of the state of the second system, the density matrix $\rho$ is of the form $\rho=\rho_+\otimes\rho_-$ for this problem.

\begin{theorem}
A pure state is being represented in $\hat{\mathscr{H}}_{J,+}\otimes\hat{\mathscr{H}}_{J,-}$ by $\pm P_{0,+} \otimes \pm P_{0,-}$, where $P_{0,+}$ and $P_{0,-}$ are idempotent elements on $\hat{\mathscr{H}}_{J,+}$ and $\hat{\mathscr{H}}_{J,-}$, respectively.
\end{theorem}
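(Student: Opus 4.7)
The plan is to work entirely at the level of density matrices first and then translate back to state vectors using the correspondence $\rho=B\bullet B$ displayed just before Definition 2. Being a pure state means $\rho$ is itself a projection, and the paragraph preceding the statement already gives the factorization $\rho=\rho_+\otimes\rho_-$ into unique subsystem marginals. So the theorem reduces to two steps: (i) showing each $\rho_\pm$ must be a rank-one projection $P_{0,\pm}$, and (ii) classifying all self-adjoint $B\in\hat{\mathscr{H}}_{J,+}\otimes\hat{\mathscr{H}}_{J,-}$ with $B\bullet B=P_{0,+}\otimes P_{0,-}$.

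For step (i), I would square the factorized density matrix: since $\rho_\pm$ are self-adjoint and positive, their Jordan squares agree with their associative squares, and
\[
\rho\bullet\rho=(\rho_+\otimes\rho_-)\bullet(\rho_+\otimes\rho_-)=\rho_+^{\,2}\otimes\rho_-^{\,2}.
\]
Pure-state idempotency $\rho\bullet\rho=\rho$ together with the multiplicativity $\operatorname{tr}(X\otimes Y)=\operatorname{tr}(X)\operatorname{tr}(Y)$ yields $\operatorname{tr}(\rho_+^{\,2})\operatorname{tr}(\rho_-^{\,2})=1$. The standard purity bound $\operatorname{tr}(\sigma^{2})\leq\operatorname{tr}(\sigma)=1$ for a density matrix $\sigma$, with equality only when $\sigma$ is a rank-one projection, then forces $\operatorname{tr}(\rho_\pm^{\,2})=1$, so each $\rho_\pm$ is a rank-one projection, which we label $P_{0,\pm}$.

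For step (ii), any state vector $B$ representing the pure state satisfies $B^{2}=P_{0,+}\otimes P_{0,-}$. A brief spectral argument finishes the job: for self-adjoint $B$, one has $\ker B=\ker B^{2}$, so $B$ must itself be rank one; writing $B=\lambda\,|\phi\rangle\langle\phi|$ with $|\phi\rangle$ a unit vector and $\lambda\in\mathbb{R}$, the equation $B^{2}=P_{0,+}\otimes P_{0,-}$ enforces $\lambda^{2}=1$ and $|\phi\rangle\langle\phi|=P_{0,+}\otimes P_{0,-}$. Hence $B=\pm\,P_{0,+}\otimes P_{0,-}$, which we record in the factored form $\pm P_{0,+}\otimes \pm P_{0,-}$ to emphasize that either factor may carry the sign; the four sign combinations collapse to two distinct state vectors, matching the right-unitary-orbit ambiguity discussed in Section 2.

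The main obstacle I anticipate is not the spectral endgame but the preparatory bookkeeping in step (i): one must verify in the real Jordan-Hilbert tensor product that the composite trace genuinely factors as $\operatorname{tr}(X\otimes Y)=\operatorname{tr}(X)\operatorname{tr}(Y)$, that Jordan and associative squaring coincide on the positive self-adjoint $\rho_\pm$ (so the algebraic manipulation is legitimate), and that the partial-trace characterization of the marginals supplied just before the statement pins down $\rho_\pm$ uniquely. Once these standard facts are invoked, the remainder of the argument is essentially linear-algebraic.
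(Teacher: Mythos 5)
Your argument is correct, and it reaches the paper's conclusion by a genuinely different route. The paper's proof takes the idempotent $P_{0,+}\otimes P_{0,-}$ as given, posits that the square root $B$ of $\rho$ is itself a \emph{product} of finite linear combinations of the orthogonal basis projections (invoking the expansion (8) in each factor), squares that ansatz so the cross terms drop, takes traces to get $\bigl(\sum\lambda_i^2\bigr)\bigl(\sum\mu_i^2\bigr)=1$, and matches coefficients to force $\lambda_0=\pm1$, $\mu_0=\pm1$ and all other coefficients to vanish. You instead (i) \emph{derive} that each marginal $\rho_\pm$ is a rank-one projection from $\operatorname{tr}(\rho_+^{2})\operatorname{tr}(\rho_-^{2})=1$ and the purity bound $\operatorname{tr}(\sigma^2)\le\operatorname{tr}(\sigma)$, and (ii) classify \emph{all} self-adjoint Hilbert--Schmidt square roots of $P_{0,+}\otimes P_{0,-}$ via $\ker B=\ker B^{2}$, with no product ansatz on $B$. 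This buys you something real: the paper's restriction of $B$ to product form is an unjustified narrowing of the class of representing vectors, whereas your kernel/rank argument rules out non-product square roots as well; your step (i) also makes explicit why the factors must be rank one rather than merely idempotent, which the paper leaves implicit. The paper's computation, for its part, stays entirely inside the orthogonal-projection calculus it set up via the Alfsen--Shultz approximation theorem, so it is more uniform with the rest of Section 3.3. The only points you should make explicit in a final write-up are the ones you already flag: that elements of $\hat{\mathscr{H}}_{J,+}\otimes\hat{\mathscr{H}}_{J,-}$ are being regarded as operators on the underlying tensor-product configuration space so that rank and kernel make sense, and that the trace is multiplicative across the tensor factors. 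Neither is a gap; both are implicit in the paper's own setup.
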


\begin{proof}
To prove our claim, we consider a pure state $\rho=\rho_+\otimes\rho_-$, given by an idempotent $P_{0,+}\otimes P_{0,-}$. Without loss of generality, we can take $P_{0,+}$ as the first basis vector of $\hat{\mathscr{H}}_{J,+}$, and $P_{0,-}$ as the first basis vector of $\hat{\mathscr{H}}_{J,-}$. We denote the other basis vectors as $P_{i,\pm}$. Let $B$ be the square root of $\rho$. Then we have
$$
B=\left(\sum_{i=0}^n\lambda_iP_{i,+}\right)\otimes\left(\sum_{i=0}^n\mu_iP_{i,-}\right).
$$
Using this, $\rho$ can be written as
$$
\rho=B^2=\left(\sum_{i=0}^n\lambda_i^2P_{i,+}\right)\otimes\left(\sum_{i=0}^n\mu_i^2P_{i,-}\right),
$$
Taking traces of both sides, we obtain 
$$
\left(\sum\lambda_i^2\right)\left(\sum\mu_i^2\right)=1.
$$ 
Finally, we have 
$$
P_{0,+}\otimes P_{0,-}=\left(\sum_{i=0}^n\lambda_i^2P_{i,+}\right)\otimes\left(\sum_{i=0}^n\mu_i^2P_{i,-}\right).
$$
The above formula clearly implies that $\lambda_0=\pm 1$, $\mu_0=\pm 1$, and $P_{i,\pm}=0$ for $i>0$.
\end{proof}

Now, we will define new observables before stating our next theorem giving the time evolution equation for the state vectors of non-commutative Landau problem.
Let 
$$
X_+=C_{1_+}(A_+^{\dagger}+A_+), \ \ \ \ \ X_-=C_{1_-}(A_-^{\dagger}+A_-),
$$
be the position operators corresponding to subsystems. And
$$
P_{+}=iC_{2_+}(A_+^{\dagger}-A_+), \ \ \ \ \ P_{-}=iC_{2_-}(A_-^{\dagger}-A_-),
$$
be the corresponding momentum operators, where
$$
C_{1_+}=\sqrt{\frac{\hbar}{2m\Omega_+}}, \ \ \ C_{2_+}=\sqrt{\frac{ \hbar m\Omega_+}{2}}, \ \ \ C_{1_-}=\sqrt{\frac{\hbar}{2m\Omega_-}}, \ \ \ 
C_{2_-}=\sqrt{\frac{ \hbar m\Omega_-}{2}}.
$$

\begin{theorem}
Given a state vector $v_+\otimes v_-\in\hat{\mathscr{H}}_{J,+}\otimes\hat{\mathscr{H}}_{J,-}$, the time evolution equation can be written as follows
$$
\frac{d}{dt}(v_+\otimes v_-)=\frac{-4}{\hbar}((([L_{S_{1_+}},L_{R_{1_+}}]+[L_{S_{2_+}},L_{R_{2_+}}])v_+\otimes v_-)+(v_+\otimes ([L_{S_{1_-}},L_{R_{1_-}}]+[L_{S_{2_-}},L_{R_{2_-}}])v_-)),
$$
where $H_+=i[R_{1_+},S_{1_+}]+i[R_{2_+},S_{2_+}]$, $H_-=i[R_{1_-},S_{2_-}]+i[R_{2_-},S_{2_-}]$,
$$
R_{1_+}=-\frac{\Omega_+}{4}(X_+A_++A_+^{\dagger}X_+),
$$
$$
R_{2_+}=-\frac{\Omega_+}{4}(2C_{1_+}P_+A_++2C_{1_+}A_+^{\dagger}P_+),
$$
$$
S_{1_+}=P_+A_++A_+^{\dagger}P_+,
$$
$$
S_{2_+}=2C_{1_+}A_+^{\dagger}A_+,
$$
$$
R_{1_-}=-\frac{\Omega_-}{4}(X_-A_-+A_-^{\dagger}X_-),
$$
$$
R_{2_-}=-\frac{\Omega_-}{4}(2C_{1_-}P_-A_-+2C_{1_-}A_-^{\dagger}P_-),
$$
$$
S_{1_-}=P_-A_-+A_-^{\dagger}P_-,
$$
$$
S_{2_-}=2C_{1_-}A_-^{\dagger}A_-,
$$
\end{theorem}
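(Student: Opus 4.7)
The plan is to derive the stated evolution equation from three ingredients: a Jordan-algebraic Schr\"odinger equation on the composite space, the tensor decomposition of $H_\theta$ established in equation (5), and the special-Jordan associator identity (7) already used in the proof of Theorem 1.

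First, I would write down the evolution equation for a state vector. A vector $v\in\hat{\mathscr{H}}_J$ is self-adjoint and corresponds to a density matrix $\rho=v\bullet v=v^2$. For the usual von Neumann equation $\dot\rho=-\tfrac{i}{\hbar}[H_\theta,\rho]$ to hold while $v$ remains in the real Jordan-Hilbert space, the natural evolution is $\dot v=-\tfrac{i}{\hbar}[H_\theta,v]$: the operator $-i[H_\theta,v]$ is self-adjoint whenever $H_\theta$ and $v$ are, and one verifies $v\bullet\dot v+\dot v\bullet v=-\tfrac{i}{\hbar}(v[H_\theta,v]+[H_\theta,v]v)=-\tfrac{i}{\hbar}[H_\theta,v^2]=\dot\rho$. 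This is the Jordan-Schr\"odinger equation on $\hat{\mathscr{H}}_{J,+}\otimes\hat{\mathscr{H}}_{J,-}$ that drives the rest of the argument.

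Next, the decomposition $H_\theta=H_+\otimes\mathbb{I}_{\hat{\mathscr{H}}_{J,-}}+\mathbb{I}_{\hat{\mathscr{H}}_{J,+}}\otimes H_-$, together with the commutativity of $H_+\otimes\mathbb{I}$ with $\mathbb{I}\otimes H_-$ and the Leibniz rule for commutators, gives
$$
[H_\theta,v_+\otimes v_-]=[H_+,v_+]\otimes v_-+v_+\otimes[H_-,v_-],
$$
so it suffices to recast each single-sector expression $-\tfrac{i}{\hbar}[H_\pm,v_\pm]$ in Jordan form. Proposition 1 guarantees self-adjoint elements $R_{j_\pm},S_{j_\pm}\in\mathcal{A}$ with $H_\pm=i[R_{1_\pm},S_{1_\pm}]+i[R_{2_\pm},S_{2_\pm}]$; the operators displayed in the statement realize this factorization explicitly. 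Expanding the left-multiplication commutator on the Jordan module gives $[L_{S_{j_\pm}},L_{R_{j_\pm}}]v_\pm=S_{j_\pm}\bullet(R_{j_\pm}\bullet v_\pm)-R_{j_\pm}\bullet(S_{j_\pm}\bullet v_\pm)=[R_{j_\pm},S_{j_\pm},v_\pm]-[S_{j_\pm},R_{j_\pm},v_\pm]$, and the special-Jordan identity $[A,B,C]=\tfrac14[B,[A,C]]$ from (7) collapses this to $[L_{S_{j_\pm}},L_{R_{j_\pm}}]v_\pm=-\tfrac14[[R_{j_\pm},S_{j_\pm}],v_\pm]$. Multiplying by $-4/\hbar$, summing over $j\in\{1,2\}$, and substituting into the tensor split above reproduces the equation of the theorem.

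The genuinely laborious step will be the explicit verification that the $R_{j_\pm},S_{j_\pm}$ given in the statement actually satisfy $H_\pm=\hbar\Omega_\pm(A_\pm^\dagger A_\pm+\tfrac{1}{2})=i[R_{1_\pm},S_{1_\pm}]+i[R_{2_\pm},S_{2_\pm}]$. Substituting $X_\pm=C_{1_\pm}(A_\pm^\dagger+A_\pm)$ and $P_\pm=iC_{2_\pm}(A_\pm^\dagger-A_\pm)$ with $C_{1_\pm}C_{2_\pm}=\hbar/2$ into each $R_{j_\pm},S_{j_\pm}$ and using $[A_\pm,A_\pm^\dagger]=1$, one expects the $R_1,S_1$-commutator to supply the \textquotedblleft position-like\textquotedblright\ contribution and the $R_2,S_2$-commutator the \textquotedblleft momentum-like\textquotedblright\ contribution; the $A_\pm^2$ and $(A_\pm^\dagger)^2$ terms should cancel between the two summands, while the surviving $A_\pm^\dagger A_\pm$ and scalar pieces assemble into $\hbar\Omega_\pm(A_\pm^\dagger A_\pm+\tfrac{1}{2})$. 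Once this identification is in hand, the rest of the theorem is formal bookkeeping built on Proposition 1 and identity (7).
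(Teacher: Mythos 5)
Your outline follows essentially the same route as the paper: split $H_\theta=H_+\otimes\mathbb{I}+\mathbb{I}\otimes H_-$, use the Leibniz rule to reduce to each sector, and convert $-\tfrac{i}{\hbar}[H_\pm,\cdot]$ into $\tfrac{-4}{\hbar}[L_{S},L_{R}]$ via the special-Jordan identity $[A,B,C]=\tfrac14[B,[A,C]]$. The one place you genuinely diverge is in how the state-vector equation is obtained: the paper starts from $\dot\rho=\tfrac{-4}{\hbar}[R,\rho,S]$, writes $\rho=B^2$, uses that the associator is a derivation in its middle argument to get $[R,B^2,S]=2B\bullet[R,B,S]$, and then cancels $2B\bullet$ against $\dot\rho=2B\bullet\dot B$ to read off $\dot B=\tfrac{-4}{\hbar}[R,B,S]$; you instead posit $\dot v=-\tfrac{i}{\hbar}[H_\theta,v]$ directly and check it reproduces the von Neumann equation for $\rho=v^2$. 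Your version is arguably cleaner, since the paper's cancellation of $2B\bullet$ from both sides tacitly assumes injectivity of $L_B$, which your argument does not need.

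The substantive omission is the verification that the displayed $R_{j_\pm},S_{j_\pm}$ actually satisfy $H_\pm=i[R_{1_\pm},S_{1_\pm}]+i[R_{2_\pm},S_{2_\pm}]$. You correctly identify this as the laborious step and correctly predict the mechanism (the $A_\pm^2$ and $(A_\pm^\dagger)^2$ terms produced by $[R_{1_\pm},S_{1_\pm}]$ must be cancelled by $[R_{2_\pm},S_{2_\pm}]$, leaving $A_\pm^\dagger A_\pm$ plus a scalar), but you do not carry it out, and this computation is in fact the bulk of the paper's proof: it evaluates $\frac{-i}{C_{1_+}C_{2_+}}[X_+A_++A_+^{\dagger}X_+,\,P_+A_++A_+^{\dagger}P_+]=4{A_+^{\dagger}}^{2}+4A_+^{2}+8A_+^{\dagger}A_++4\mathbb{I}$ and $\frac{-i}{C_{2_+}}[P_+A_++A_+^{\dagger}P_+,\,A_+^{\dagger}A_+]=-2{A_+^{\dagger}}^{2}-2A_+^{2}$ and combines them. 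Since the theorem's content lives in these specific operators, a complete proof must include this check (and you should track the prefactors $C_{1_\pm},C_{2_\pm},\Omega_\pm$ through it carefully, as the cancellation of the off-diagonal terms and the emergence of the overall factor $\hbar\Omega_\pm$ both hinge on them). With that computation supplied, your argument is complete and equivalent to the paper's.
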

\begin{proof}
Let us first see that $H_+=i[R_{1_+},S_{1_+}]+i[R_{2_+},S_{2_+}]$, $H_-=i[R_{1_-},S_{2_-}]+i[R_{2_-},S_{2_-}]$. We will only show the first equality, since two subsystems are independent identical copies. Note that
\begin{align}
\frac{-i}{C_{1_+}C_{2_+}} [X_+A_++A_+^{\dagger}X_+,P_+A_++A_+^{\dagger}P_+] & =[2A_+^{\dagger}A_++A_+^2+{A_+^{\dagger}}^2,{A_+^{\dagger}}^2-A_+^2] \nonumber \\
& =[2A_+^{\dagger}A_+,{A_+^{\dagger}}^2]-[2A_+^{\dagger}A_+,A_+^2]+2[A_+^2,{A_+^{\dagger}}^2] \nonumber \\
& =2(A_+^{\dagger}[A_+,{A_+^{\dagger}}^2]-[A_+^{\dagger},A_+^2]A_++[A_+,{A_+^{\dagger}}^2]A_++A_+[A_+,{A_+^{\dagger}}^2]) \nonumber \\
& =4{A_+^{\dagger}}^2+4A_+^2+8A_+^{\dagger}A+4\mathbb{I}. \nonumber
\end{align}
Similarly,
\begin{align}
\frac{-i}{C_{2_+}}[P_+A_++A_+^{\dagger}P_+,A_+^{\dagger}A] &= [{A_+^{\dagger}}^2,A_+^{\dagger}A_+]-[A_+^2,A_+^{\dagger}A_+] \nonumber \\
& =-[A_+^{\dagger}A_+,A_+^{\dagger}]A_+^{\dagger}-A_+^{\dagger}[A_+^{\dagger}A_+,A_+^{\dagger}]+[A_+^{\dagger}A_+,A_+]A_++A_+[A_+^{\dagger}A_+,A_+] \nonumber \\
& =-2{A_+^{\dagger}}^2-2A_+^2. \nonumber
\end{align}
Combining these, we get the required result.
The time evolution equation for density matrices in a $C^*$-algebra is given by
\begin{equation}
\dot{\rho}=-\frac{i}{\hbar}[H,\rho],
\end{equation}
where $H$ denotes the Hamiltonian. In the case of Jordan algebras, inner derivations are given in terms of the associator, which acts as a derivation on the second argument \cite{HOS}. Assuming $H=i[R,S]$, where $R$ and $S$ are self-adjoint, (9) transforms into the following equation 
$$
\dot{\rho}=\frac{-4}{\hbar}[R,\rho,S]=\frac{-4}{\hbar}[L_S,L_R]\rho,
$$
where $L_S$ and $L_R$ are left multiplication operators with $S$ and $R$, respectively. This amounts to the Jordan-algebraic evolution equation for the density matrices. All density matrices can be written in the form $\rho=B^2$, for some $B\in\mathcal{A}_2$. Taking the derivative of both sides
$$
\dot{\rho}=2B\bullet \dot{B}.
$$
On the other hand, we also have
$$
\dot{\rho}=\frac{-4}{\hbar}[R,\rho,S]=\frac{-4}{\hbar}[R,B^2,S]=\frac{-4}{\hbar}2B\bullet [R,B,S].
$$
Comparing them, one obtains
$$
\dot{B}=\frac{-4}{\hbar}[R,B,S]=\frac{-4}{\hbar}[L_S,L_R]B.
$$
Let us remark that this equation can be extended to whole space $\hat{\mathscr{H}}_J$ by
$$
\dot{v}=\frac{-4}{\hbar}[R,v,S]=\frac{-4}{\hbar}[L_S,L_R]v.
$$
Next, let $v_+\otimes v_-$ be an element of $\hat{\mathscr{H}}_{J,+}\otimes\hat{\mathscr{H}}_{J,-}$ corresponding to the density matrix $\rho_+\otimes\rho_-$, and let $H_{\theta}=H_+\otimes\mathbb{I}_{\hat{\mathscr{H}}_{J,-}}+\mathbb{I}_{\hat{\mathscr{H}}_{J,+}}\otimes H_-$ be the Hamiltonian, where $H_+=i[R_{1_+},S_{1_+}]+i[R_{2_+},S_{2_+}]$, $H_-=i[R_{1_-},S_{2_-}]+i[R_{2_-},S_{2_-}]$. Based on these, we may deduce that
$$
\frac{d}{dt}(\rho_+\otimes \rho_-)=-\frac{i}{\hbar}[H_+\otimes\mathbb{I}_{\hat{\mathscr{H}}_{J,-}}+\mathbb{I}_{\hat{\mathscr{H}}_{J,+}}\otimes H_-, \rho_+\otimes \rho_-]
$$
$$
\ \ \ \ \ \ \ \ \ \ \ \ \ \ \ \ \ \ \ \ \ \ \ \ \ \ \ \ \ \ \ \ 
=-\frac{i}{\hbar}([H_+\otimes\mathbb{I}_{\hat{\mathscr{H}}_{J,-}}, \rho_+\otimes \rho_-]+[\mathbb{I}_{\hat{\mathscr{H}}_{J,+}}\otimes H_-,\rho_+\otimes \rho_-])
$$
$$
\ \ \ \ \ \ \ \ \ \ \ \ \ \ \ \ \
=-\frac{i}{\hbar}(([H_+, \rho_+]\otimes \rho_-)+(\rho_+\otimes [H_-, \rho_-])).
$$
Using the discussion above, we may replace $-i[H,\rho]$ with $-4[L_{S_1},L_{R_1}]v-4[L_{S_2},L_{R_2}]v$ to arrive at the formula
$$
\frac{d}{dt}(v_+\otimes v_-)=\frac{-4}{\hbar}((([L_{S_{1_+}},L_{R_{1_+}}]+[L_{S_{2_+}},L_{R_{2_+}}])v_+\otimes v_-)+(v_+\otimes ([L_{S_{1_-}},L_{R_{1_-}}]+[L_{S_{2_-}},L_{R_{2_-}}])v_-)).
$$
Finally, the solution to this Jordan-Schr\"{o}dinger equation can be written as
$$
v_+(t)\otimes v_-(t)=e^{\frac{-4t}{\hbar}([L_{S_{1_+}},L_{R_{1_+}}]+[L_{S_{2_+}},L_{R_{2_+}}])}v_+(0)\otimes e^{\frac{-4t}{\hbar}([L_{S_{1_-}},L_{R_{1_-}}]+[L_{S_{2_-}},L_{R_{2_-}}])}v_-(0).
$$
\end{proof}

\section{Concluding Remarks}

In this work, for a two dimensional non-commutative space, a Jordan algebraic formulation of the Hamiltonian associated to the motion of an electron under a uniform magnetic field with a confining harmonic potential has been provided. The main result is given in Theorem 1, and the non-commutative parameter $\theta$ is written in terms of an associator in the Jordan algebra of observables. Secondly, a general understanding for the expectation value of an observable is provided in terms of the Jordan algebraic setting. In addition, the pure states are characterized for this specific problem. Lastly, a Jordan algebraic time evolution equation for the state vectors in the constructed Hilbert space and its solution are presented.

\section{Acknowledgement} 

T.D. thanks The Turkish Academy of Sciences (T\"{U}BA)  for partial support.

\section{Data Availability}

Data sharing is not applicable to this article.

\bigskip


{\small 

}

\end{document}